\def\ket#1{|#1 \rangle}
\def\RR{\mathbb{R}}
\def\H{\mathcal{H}}
\def\vec#1{\mathbf{#1}}
\def\A{\mathtt{A}}
\def\S{\mathtt{S}}
\def\I{\mathtt{I}}
\def\ld{\mathop{\rm ld}}
\def\Tr{\mathrm{Tr}}
\def\vv{\bm v}
\newtheorem{theorem}{Theorem}
\title{\LARGE\bf Robustness of Quantum Systems Subject to Decoherence: \\ Structured Singular Value Analysis?}
\author{S.\,G.\ Schirmer \and F.\,C.\ Langbein \and C.\,A.\ Weidner \and E.\,A.\ Jonckheere%
\thanks{S.\,G.\ Schirmer is with the Faculty of Science \& Engineering, Swansea University, Singleton Park, Swansea, SA2 8PP, UK, {\tt\small lw1660@gmail.com}.}%
\thanks{F.\,C.\ Langbein is with the School of Computer Science and Informatics, Cardiff University, UK, {\tt\small frank@langbein.org}.}%
\thanks{C.\,A.\ Weidner is with the Institute for Physics and Astronomy at Aarhus University, Denmark, {\tt\small cweidner@phys.au.dk}.}%
\thanks{E.\,A.\ Jonckheere is with the Department of Electrical Engineering, University of Southern California, Los Angeles, CA 90089, {\tt jonckhee@usc.edu}.}
\thanks{This work is partially supported by NSF IRES 1829078.}}
\begin{document}

\maketitle
\thispagestyle{empty}
\pagestyle{empty}

\begin{abstract}
We study the problem of robust performance of quantum systems under structured uncertainties.  A specific feature of closed (Hamiltonian) quantum systems is that their poles lie on the imaginary axis and that neither a coherent controller nor physically relevant structured uncertainties can alter this situation.  This changes for open systems where decoherence ensures asymptotic stability and creates a unique landscape of pure performance robustness, with the distinctive feature that closed-loop stability is secured by the underlying physics and needs not be enforced.  This stability, however, is often detrimental to quantum-enhanced performance, and additive perturbations of the Hamiltonian give rise to dynamic generators that are nonlinear in the perturbed parameters, invalidating classical paradigms to assess robustness to structured perturbations such as singular value analysis. This problem is addressed using a fixed-point iteration approach to determine a maximum perturbation strength $\delta_{\max}$ that ensures that the transfer function remains bounded, $||T_\delta||<\delta^{-1}$ for $\delta<\delta_{\max}$.
\end{abstract}

\section{INTRODUCTION}

Quantum control has become hugely popular in recent years due to the promise of novel technologies exploiting quantum phenomena. However, technological applications usually require robustness to bring quantum technologies out of the laboratory and into real-world application spaces. So it is crucial to find ways to assess robustness and engineer robust quantum controls. For classical robust control, many tools have been developed, including structured singular value analysis~\cite{Zhou}, to calculate stability margins and assess performance robustness in the presence of perturbations. However, it is often difficult to apply these tools to quantum systems for reasons ranging from the non-commutativity of state variables in, e.g., quantum optics~\cite{IRP_quantum_survey_as_published}, to the lack of stability margins.

Ideal quantum systems are governed by Hamiltonian dynamics. Quantum control problems are typically formulated as open-loop, model-based control problems, controlling Hamiltonian dynamics by changing the total Hamiltonian that governs the evolution of the system via the application of control fields. However, it is possible to formulate quantum control problems as feedback control problems~\cite{Edmond_IEEE_AC}. In general this results in a non-autonomous control system, but under certain conditions we can formulate quantum control as a time-invariant feedback control problem~\cite{bilinear_constant_input}. Such problems are in theory amenable to the application of classical robust control tools, but there are still numerous obstacles. Most prominently, Hamiltonian quantum systems are inherently unstable as the poles of the transfer function are entirely on the imaginary axis, and coherent \emph{Hamiltonian} control, not to be confused with coherent \emph{feedback} control, often employed in optical networks, cannot change this.

The situation changes for open quantum systems, as decoherence acts as a stabilizing controller for quantum systems. Open quantum systems are therefore in principle amenable to the application of tools like structured singular value analysis. Motivated by recent results~\cite{robust_performance_open}, we specifically consider controlled quantum systems with phase decoherence. With decoherence providing stability margins, we consider the effect of structured perturbations of the Hamiltonian on the performance. Despite the apparent similarity of the problem formulation to classical robust control problems, significant differences quickly become obvious. Structured perturbation of the Hamiltonian must respect Hermitian symmetry to be physical.

In addition, linear perturbations to the Hamiltonian affect system decoherence and give rise to \emph{nonlinear} structured perturbation of the dynamics. Nonetheless, we can define a structurally perturbed transfer function, $T_\delta(s)$, evaluate its norm and find the maximum perturbation magnitude $\delta_{\max}$ that ensures $||T_\delta||_\infty \leq \delta^{-1}$ for all $\delta<\delta_{\max}$.
Finally, we compare the robustness of the performance in terms of transfer fidelity for a controlled ring.

\section{THEORY \& METHODS}
\subsection{Bloch Equation Formulation of Quantum Control}

The starting point for quantum control is often the Schr\"odinger equation
\begin{equation}
 \imath \hbar \tfrac{d}{dt} \ket{\Psi(t)} = H \ket{\Psi(t)},
\end{equation}
where $H$ is the Hamiltonian and $\ket{\Psi(t)}$ denotes a wavefunction representing the quantum state of the system. Control in this context is typically formulated as replacing the system's natural Hamiltonian $H_0$ by a controlled Hamiltonian, $H = H_0 + \sum_m f_m(t) H_m$. In this formulation the Hamiltonian depends linearly on the controls, resulting in a bilinear control problem. Here, however, we will restrict ourselves to time-invariant forcing terms $f_m$. The Schr\"odinger equation formulation is limited to closed systems and pure quantum states but it can easily be generalized by replacing the wavefunctions by density operators $\rho$ acting on the Hilbert space $\H$. For any $\H$ with dimension $N<\infty$ we can choose an orthonormal basis for the Hermitian operators on $\H$ and expand both the state $\rho$ and Hamiltonian $H$ with regard to this basis, resulting in a convenient, real representation of the dynamics,
\begin{equation}
  \tfrac{d}{dt} \vec{r}(t) = \A_H \vec{r}(t),
\end{equation}
where $\vec{r} \in \RR^{N^2}$ and $\A_H \in \RR^{N^2 \times N^2}$. It is easy to show that $\A_H$ is an adjoint representation of the system. Given $H=H_0+\sum_m f_m(t) H_m$, we have $\A_H = \A_0 + \sum_m f_m(t) \A_m$. The generators $\A_m$ for $m=0,1,\cdots$, corresponding to the Hamiltonian dynamics, are real anti-symmetric matrices (rotation generators) with purely imaginary eigenvalues, but this changes when decoherence is added.

\subsection{Decoherence and Stability Margins}

We consider systems subject to decoherence in the form of dephasing in the Hamiltonian basis, a common type of decoherence observed, e.g., when electrons or nuclear spins in a magnetic field precess at (slightly) different rates due to local field inhomogeneities, resulting in them getting out of phase and the ensemble losing coherence. Mathematically, the process can be modeled by
\begin{equation*} \textstyle
  \tfrac{d}{dt}\rho = -\imath [H,\rho]
   + \gamma \sum_k \left( V_k\rho V_k - \tfrac{1}{2}\{V_k^2,\rho\} \right),
   \quad [H,V_k]=0,
\end{equation*}
where the $V_k$'s are Hermitian ``decoherence" operators, and $[\cdot,\cdot]$ and $\{\cdot,\cdot\}$ denote the commutator and anti-commutator, resp. These systems have a number of stable steady states, determined by the interplay of the Hamiltonian and the decoherence. We are interested in their robustness under Hamiltonian uncertainty. In previous work~\cite{statistical_control}, we studied robustness for small perturbations using measures such as the logarithmic sensitivity. Here, we aim to assess the robustness to larger perturbations of the Hamiltonian, which should be physical, i.e., at a minimum, preserve Hermitian symmetry. If decoherence acts in the Hamiltonian basis, changes to the Hamiltonian also indirectly affect the dephasing generators $V_k$, often in a non-linear fashion, resulting in complicated structures for the perturbations of interest.

\subsection{``Dephasing-Structured'' Perturbations}

Let $H$ be the nominal Hamiltonian and $\tilde{H} = H + \Delta H$ a perturbed Hamiltonian, linear additive in physically meaningful parameters. Express the Hamiltonian and decoherence operators in the eigendecomposition of $H$,
\begin{equation} \textstyle
 H = \sum_n \lambda_n \Pi_n, \quad V_k = \sum_n c_{k,n} \Pi_n,
\end{equation}
where $\Pi_n$ is the projector on the eigenspace associated with the eigenvalue $\lambda_n$ and $\{c_{k,n}\}_{n=1}^N$ are eigenvalues of the decoherence operator $V_k$, which must satisfy certain conditions to ensure the decoherence processes remain physical~\cite{Sophie_PRL_decoherence}.

Expressing the decoherence operators in terms of the eigenspaces of $H$ secures $[H,V_k]=0$, that is, the decoherence is pure dephasing acting in the Hamiltonian basis. Similarly, for the perturbed system,
\begin{equation} \textstyle
 \tilde{H}  = \sum_n \tilde{\lambda}_n \tilde{\Pi}_n, \quad \tilde{V}_k = \sum_n c_{k,n} \tilde{\Pi}_n.
\end{equation}
The perturbation of the Hamiltonian changes both the eigenvalues and eigenspaces, including the eigenspaces of the decoherence operators, but not the decoherence rates, determined by the coefficients $c_{k,n}$ in our model. For decoherence to act in the Hamiltonian basis of the perturbed system, we need $[\tilde{H},\tilde{V}_k]=0$, which implies that a linear perturbation of $H$ leads to a non-linear perturbation of the decoherence terms. Consequently, $\tilde{\A}$ (the perturbed $\A$) is nonlinear in $\Delta H$, a departure from the classical paradigm, considering additive perturbations written as $\delta \S$, $\delta$ being its size and $\S$ its $\delta$-independent structure. Having chosen a suitable operator basis for the Hilbert space, with respect to which to expand the operators $H$, $V_k$, $\tilde{H}$, $\tilde{V}_k$, we obtain the real $N^2\times N^2$ dynamical generators
\begin{equation} \textstyle
 \A = \A_H + \sum_k \A_{V_k}, \quad
 \tilde{\A} = \tilde{\A}_H + \sum_k \tilde{\A}_{V_k} = \A + \delta \S(\delta),
\end{equation}
with the ``dephasing-structured'' perturbation $\S(\delta) = \delta^{-1} (\tilde{\A}(\delta)-\A)$ that has an unusual $\delta$ dependency. Even for simple Hamiltonian perturbations, the resulting perturbation matrices $\S(\delta)$ have a complicated structure that cannot be reduced to a simple block-diagonal structure with blocks consisting of real or complex diagonal matrices or general complex blocks. Therefore, standard tools available for structured singular value analysis such as the \texttt{mussv} function in \texttt{MATLAB} cannot be applied.

\subsection{Dephasing-Structured Perturbation Analysis}

We consider a nominal $\A$ and a perturbed $\tilde{\A}$-dynamics and assess the difference \emph{relative to the perturbed dynamics:}
\begin{align}\label{eq:T}
 T_\delta(s)
 &=\left[(sI-\tilde{\A}(\delta))^{-1}-(sI-\A)^{-1}\right]
   \left[(sI-\tilde{\A}(\delta))^{-1}\right]^{-1}\nonumber\\
 &=\delta(sI-\A)^{-1}\S(\delta).
\end{align}
Scaling relative to the perturbed dynamics appears counter-intuitive but has some advantages~\cite[Eqs.\ (2.41), (2.42)]{Safonov_Laub_Hartmann} as the perturbed dynamics are the true dynamics, and also simplifies the frequency sweep.

The transfer matrix $T_\delta(s)$ and the behavior of $\|T_\delta (\imath \omega)\|$ versus $\delta$ is the central object of concern. Calculating the norm of the transfer function is complicated by the fact that $T_\delta(\imath\omega) = \delta(\imath\omega\I -\A)^{-1} \S(\delta)$ is ill-defined for $\omega=0$ as the $N^2 \times N^2$ matrix $\A$ for a system subject to pure dephasing always has $N$ zero eigenvalues~\cite{robust_performance_open} corresponding to constants of motion. We can solve this problem by eliminating the null-space of $\A$ corresponding to the constants of motion (trace invariants) and replacing $(\imath\omega \I-\A)^{-1}$ by a suitable, effective inverse~\cite{robust_performance_open}. Furthermore, the norm of the transfer function $T_\delta(\imath\omega)$ for a quantum system subject to pure dephasing in the Hamiltonian basis assumes its maximum when $\omega$ is an eigenfrequency of the total Hamiltonian of the system, including the effects of the control and perturbations, as exemplified in Fig.~\ref{fig1}(a). This means it suffices to calculate the eigenfrequencies of the system, which correspond to the energy differences between quantum states that make up the eigenbasis of the system, $\Delta E = \hbar\omega$. Since there are at most $N(N-1)/2$ distinct eigenvalues ($\pm$ pairs), we only have to evaluate the transfer function for this discrete set of frequencies to obtain the maximum, which greatly accelerates the computation.

\noindent{\bf Fundamental objective:} The fundamental objective is to calculate the maximum perturbation strength $\delta_{\max}$ such that $\|T_\delta\|_\infty \leq \delta^{-1}$ for all $\delta < \delta_{\max}$. If $\S$ is independent of $\delta$, it is easily seen that $\delta_{\mathrm{max}}=1/\sup_\omega \mu(G(\imath \omega))$, where $\mu$ is the structured singular value of $G(s)$, the $2 \times 2$ block connection matrix around which the diagonally perturbed and fictitious feedback matrices are wrapped~\cite[Fig.\ 10.5, Th.\ 10.8]{Zhou}.

\noindent{\bf Computational solution:} We choose a regular grid for $\ld\delta =\log_{10} \delta$ and numerically calculate the norm of the transfer function $T_\delta(\imath\omega)$ as described above. We then find $\delta_{\max} = f(\delta_{\max})$ for $f(\delta)=\|T_\delta\|^{-1}$ by fitting $f(\delta)$ and calculating the intersection point $\delta_{\max}=f(\delta_{\max})$ \emph{if such a fixed point exists}. When $f(\delta)$ is continuous and follows a power law dependence, a linear fit $y=ax+b$ of $\ld f(\delta)$ vs $\delta$ is performed for $x=\ld \delta$ and $y=\ld f(\delta)$ and the intersection point is calculated as $\delta_{\max}=10^{-b/(a-1)}$. Otherwise, a spline fit is performed and the intersection point computed numerically using the \texttt{MATLAB} function \texttt{fzero}, which uses a combination of bisection, secant, and inverse quadratic interpolation methods. There is a complication, however, as the perturbation $\S$ of the dynamics in the Bloch representation has itself a dependency on $\delta$ as a linear Hamiltonian perturbation $\Delta H$ causes a non-linear perturbation in the decoherence part of the dynamics if decoherence acts in the Hamiltonian basis. However, it may be expected that for sufficiently small $\delta$, $\S(\delta)$ is approximately constant. Occasional outliers in the data are removed prior to fitting using \texttt{MATLAB}'s \texttt{rmoutliers}. By default outliers are values of more than three scaled median absolute deviations.

\noindent{\bf The caveat:} Continuity of $f(\cdot)$ can be traced back to continuity of the decoherence operator $V_k(\delta)$ solution to the pure dephasing condition $[H(\delta),V_k(\delta)]=0$. Write the solution as $V_k=(v_{k,1}, v_{k,2},\cdots,v_{k,N})$ and define $\vv_k=(v_{k,1}^T, v_{k,2}^T,\cdots,v_{k,N}^T)^T$. Then the pure dephasing condition in the adjoint representation can be written as $\left( I \otimes H(\delta) - H(\delta) \otimes I\right) \vv_k(\delta) =0$. The question is whether a continuous basis in the kernel of $\mathrm{ad}_{H(\delta)}$ exists. For symmetric rings at the nominal $\delta=0$ the dimension of the null space of $\mbox{ad}_{H(\delta)}$ changes, so that Dole{\u z}al's theorem~\cite{Dolezal1} does not apply. However, a generalization~\cite{Dolezal2} implies:
\begin{theorem}
 There are analytic branches $\vv_k(\delta)$ in solutions to $\mathrm{ad}_{H(\delta)} \vv_k(\delta)=0$ in a neighborhood of $\delta=0$. Moreover, $\delta\S(\delta)$ is analytic in $\delta$ in the same neighborhood. Finally, $T_\delta(\imath \omega)$ is real-analytic in both $\delta$ and $\omega$.
\end{theorem}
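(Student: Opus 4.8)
The plan is to reduce all three assertions to a single analyticity statement about the decoherence operators $V_k(\delta)$ and then propagate that regularity through the polynomial maps that assemble the Bloch generator and the transfer matrix. First I would record that the physically meaningful perturbation makes $H(\delta)$ depend affinely, hence analytically, on $\delta$, and keeps it Hermitian for every real $\delta$. Consequently the matrix $\mathrm{ad}_{H(\delta)} = I\otimes H(\delta) - H(\delta)\otimes I$ is a real-analytic family of self-adjoint operators on the operator space equipped with the Hilbert--Schmidt inner product, self-adjointness being the one-line computation $\langle \mathrm{ad}_H X, Y\rangle = \langle X, \mathrm{ad}_H Y\rangle$ for $H=H^\dagger$.

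The crux is then to obtain an analytic eigenstructure through the nominal point. Because $\mathrm{ad}_{H(\delta)}$ is self-adjoint and analytic, Rellich's theorem — the self-adjoint analytic perturbation theory underlying the generalized Dole{\u z}al result of~\cite{Dolezal2} — yields real-analytic eigenvalues $\mu_j(\delta)$ together with a real-analytic orthonormal eigenbasis $|e_n(\delta)\rangle$ of $H(\delta)$, valid \emph{across} the eigenvalue collisions occurring at the symmetric point $\delta=0$. This is precisely the mechanism that circumvents the constant-rank hypothesis of the classical theorem. Setting $V_k(\delta)=\sum_n c_{k,n}\,|e_n(\delta)\rangle\langle e_n(\delta)|$ produces an operator that is diagonal in the $H(\delta)$-eigenbasis, hence commutes with $H(\delta)$, i.e. $\mathrm{ad}_{H(\delta)}\vv_k(\delta)=0$ for every $\delta$, and is analytic because it is a fixed quadratic expression in the analytic vectors $|e_n(\delta)\rangle$. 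These are the claimed analytic branches $\vv_k(\delta)$, which at $\delta=0$ populate the enlarged kernel.

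For the second claim I would transport this regularity forward: the Hamiltonian part $\tilde{\A}_H(\delta)$ is linear in $H(\delta)$, and each dissipative part $\tilde{\A}_{V_k}(\delta)$ is a fixed bilinear/quadratic map of $V_k(\delta)$, so $\tilde{\A}(\delta)=\tilde{\A}_H(\delta)+\sum_k \tilde{\A}_{V_k}(\delta)$ is analytic as a composition of analytic maps, and therefore $\delta\S(\delta)=\tilde{\A}(\delta)-\A$ is analytic near $\delta=0$ (it even vanishes there, so $\S$ itself extends analytically). For the third claim I would use the separated form $T_\delta(\imath\omega)=(\imath\omega\I-\A)^{+}\,\delta\S(\delta)$, where $(\cdot)^{+}$ denotes the effective inverse on the complement of the null space. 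After deleting the $N$ zero modes, the surviving eigenvalues of $\A$ have strictly negative real part (decoherence damping), so $\imath\omega\I-\A$ is invertible for all real $\omega$; the effective resolvent is thus rational in $\imath\omega$ with no poles on the imaginary axis and hence real-analytic in $\omega$. Since the two factors depend on disjoint variables, one real-analytic in $\omega$ and the other analytic in $\delta$, their entrywise products admit convergent double power series, and $T_\delta(\imath\omega)$ is jointly real-analytic.

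The main obstacle is the first step. At the symmetric nominal point the degeneracies of $H$ inflate the kernel of $\mathrm{ad}_{H(\delta)}$, so its dimension drops discontinuously as $\delta$ leaves $0$ and the constant-rank Dole{\u z}al theorem fails. Everything hinges on self-adjointness, which upgrades the generically merely continuous (Puiseux-type, possibly branched) eigenstructure of a non-normal analytic family to a genuinely analytic one. I would therefore be careful to verify the self-adjointness hypothesis explicitly and to invoke the precise generalized Dole{\u z}al/Rellich statement, since this is exactly where a naive argument would break down.
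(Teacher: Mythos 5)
Your proposal is correct, but it proves the key step by a genuinely different mechanism than the paper. The paper handles the collision of kernel dimensions at $\delta=0$ by citing the Silverman--Bucy generalization of Dole{\u z}al's theorem, which guarantees analytic branches in the kernel of the analytic family $\mathrm{ad}_{H(\delta)}$ even when the kernel dimension jumps; that result is proved via the Weierstrass factorization theorem and, importantly, requires no self-adjointness at all --- it applies to arbitrary analytic matrix families. You instead invoke Rellich's theorem for one-parameter self-adjoint analytic families, applied directly to $H(\delta)=H+\delta S$, to get analytic eigenvalue and orthonormal eigenvector branches through the crossings, and then build $V_k(\delta)=\sum_n c_{k,n}\ket{e_n(\delta)}\langle e_n(\delta)|$ explicitly. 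So your claim that Rellich is ``the mechanism underlying'' the generalized Dole{\u z}al result is a mislabeling --- the two theorems have different hypotheses and different proofs --- but this is a citation inaccuracy, not a gap, since Rellich is independently valid and sufficient in this setting (one real parameter, Hermitian family). Your route actually buys something the paper leaves implicit: an analytic kernel basis of $\mathrm{ad}_{H(\delta)}$ per Silverman--Bucy need not consist of spectral projectors, so the paper's proof is silent on how the branches are matched to the prescribed dephasing eigenvalues $c_{k,n}$; your explicit construction resolves this by attaching each $c_{k,n}$ to an analytic eigenvector branch, with the branches emanating from a degenerate eigenspace at $\delta=0$ summing to the nominal projector $\Pi_n$. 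Conversely, the paper's route is more general --- it would survive non-normal generators or settings where Rellich fails, whereas your argument hinges essentially on Hermiticity and the single real parameter (with several parameters or without self-adjointness only Puiseux-type branching is available, as you yourself note). Your propagation of analyticity to $\delta\S(\delta)$ matches the paper's (polynomial Lindblad-to-Bloch transcription composed with analytic $V_k(\delta)$), and for $T_\delta(\imath\omega)$, where the paper merely says analyticity ``follows trivially,'' you supply the missing detail --- the effective inverse after deleting the $N$ zero modes, no poles on the imaginary axis, and joint real-analyticity from the separated dependence on $\omega$ and $\delta$ --- at the cost of making explicit the same genericity assumption the paper uses implicitly, namely that all nonzero eigenvalues of $\A$ have strictly negative real part.
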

\begin{proof}
The existence of analytic branches $\vv_k(\delta)$ in the kernel of $\mathrm{ad}_{H(\delta)}$, even under varying dimension with $\delta$, is guaranteed by~\cite{Dolezal2}, as a corollary of the Weirstrass factorization theorem~\cite[Chap. 1]{KnoppII}. Hence, the operator $V_k(\delta)$ is analytic and since an analytic function of an analytic function is analytic (Fa\`a di Bruno expansion~\cite{primer_analytic}), the Lindbladian $\left(\tilde{V}_k(\delta)\rho \tilde{V}_k(\delta)-\tfrac{1}{2}\{\tilde{V}_k^2(\delta),\rho\}\right)$ is analytic in $\delta$.

The transcription from the Lindblad to the Bloch formulation $(\tilde{H},\tilde{V}_k) \mapsto \tilde{\A}$ is analytic in both arguments, so $\tilde{\A}(\delta)$ and $\delta \S(\delta)$ are analytic. Analyticity of $T_\delta (\imath \omega)$ follows trivially.
\end{proof}

\subsection{Application to Spin Systems}

We apply this analysis to coupled spin systems subject to energy landscape control considered in previous work~\cite{chains_QINP, rings_QINP}. While applicable to spin systems in general, as a concrete example we consider rings and chains with dynamics restricted to the single excitation subspace~\cite{chains_QINP,rings_QINP}. Assuming nearest-neighbor coupling between spins, the Hamiltonians take on a simple symmetric tridiagonal (chains) or cyclic (rings) structure. The off-diagonal elements are determined by the intrinsic interaction strength between adjacent spins. The diagonal elements are mostly determined by the externally controlled energy landscape. Here, the energy landscape controls are simply constants added to the diagonal corresponding to local, static potentials on the spins. We can define physically meaningful structured perturbations.

For a chain of $N$ uniformly coupled spins with XX coupling we have
\begin{subequations}
\begin{align}
 H_{\rm Ch} &= \textstyle\sum_{n=1}^{N-1} e_{n,n+1} + e_{n+1,n}, \\
 H_{\rm Ct} &= \textstyle\sum_{n=1}^{N} D_n e_{nn},
\end{align}
\end{subequations}
and for a ring, $H_{\rm ring} = H_{\rm chain} + e_{1N}+e_{N1}$, where $e_{mn}$ is an $N\times N$ matrix with $1$ in the $m,n$ position and zeros otherwise. $D_n$ are the control parameters. There are $N-1$ basic structured perturbations of individual couplings between spins
\begin{equation}
 S_n^H = e_{n,n+1} + e_{n+1,n}, \quad n=1, \cdots, N-1
\end{equation}
and for rings, $S_N^S=e_{N,1}+e_{1,N}$, and $N$ perturbations for the controlled elements on the diagonal
\begin{equation}
 S_n^D = e_{n,n} \quad n=1,\cdots, N
\end{equation}
where the superscript $H$ indicates that the uncertainty pertains to the system Hamiltonian and $D$ indicates uncertainty in the controller.

\section{RESULTS AND DISCUSSION}
\subsection{Preliminary Results}

\begin{figure}
  \subfloat[Norm $\|T_\delta(\imath\omega)\|$ vs. $\omega$, $\delta$] {\includegraphics[width=0.5\columnwidth]{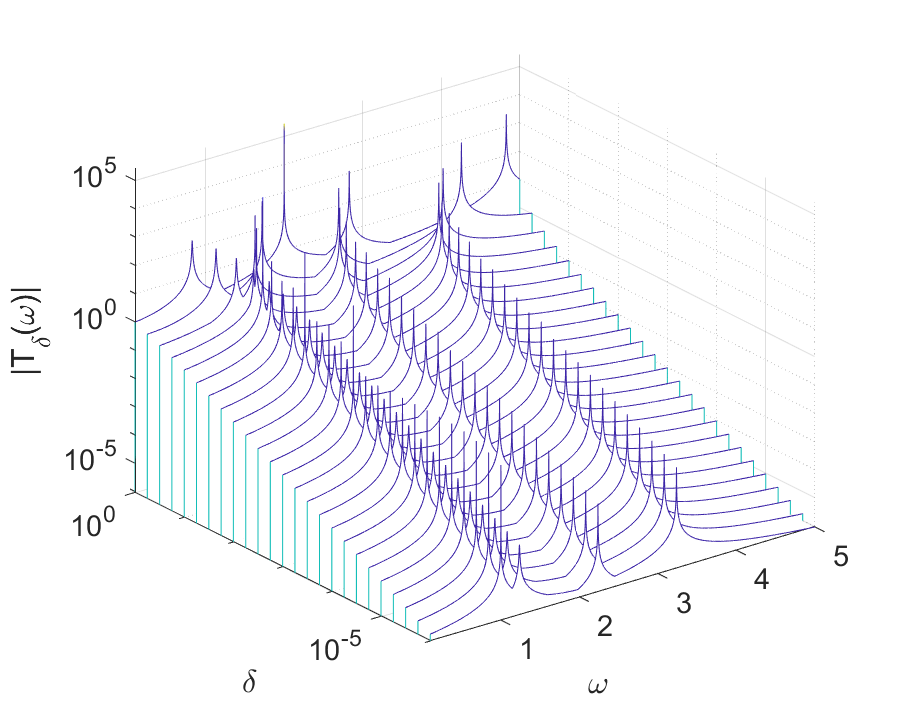}}
  \subfloat[Norm $||T_\delta||$ vs. $\delta$]
  {\includegraphics[width=0.5\columnwidth]{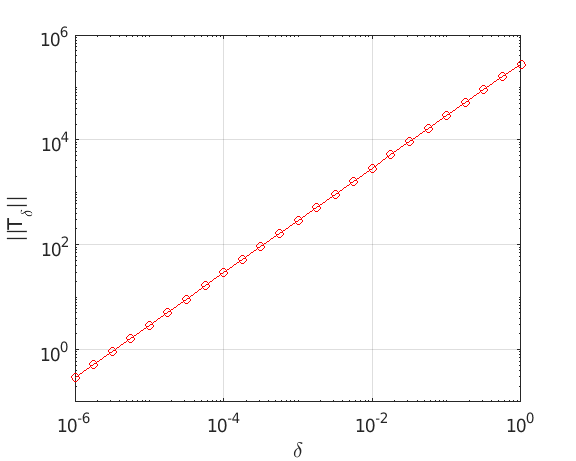}}
  \caption{Norm vs. (a) $\delta, \omega$ and (b) $\delta$ of transfer matrix for uncontrolled $N=4$ chain. (a) shows that $\|T_\delta(\imath\omega)\|$ assumes its maximum at distinct spikes corresponding to eigenfrequencies of the system. (b) shows the norm calculated by searching over eigenfrequencies.}\label{fig1}
\end{figure}

Fig.~\ref{fig1}(a) shows that $\|T_\delta(\imath\omega)\|$ assumes its maximum when $\omega$ is an eigenfrequency of the system and that it suffices to evaluate $\|T_\delta(\imath\omega)\|$ for the eigenfrequencies of the system. The observed four distinct peaks seen around $\omega \in \{1,1.23, 2.23, 3.23\}$ are consistent with expectations. For the uncontrolled, unperturbed chain it is easy to verify by direct calculation that the Hamiltonian has four distinct eigenvalues at $1,\sqrt{5}\pm 1,\sqrt{5}$, two of which are ($1$, $\sqrt{5}$) two-fold degenerate. We find empirically that if the Hamiltonian is perturbed the eigenfrequencies shift.  Fig.~\ref{fig1}(b) further suggests a scaling of $||T_\delta||$ consistent with theoretical expectations, following a power law $||T_\delta|| \propto \delta^\alpha$. A numerical fit of the data gives $\alpha = 0.9981$ with $95$\% confidence interval $(0.9971, 0.9992)$, i.e., slightly below $1$, the value expected if $\S$ was independent of $\delta$.

\begin{figure}
 \subfloat[Critical frequencies vs $\delta$] {\includegraphics[width=0.5\columnwidth]{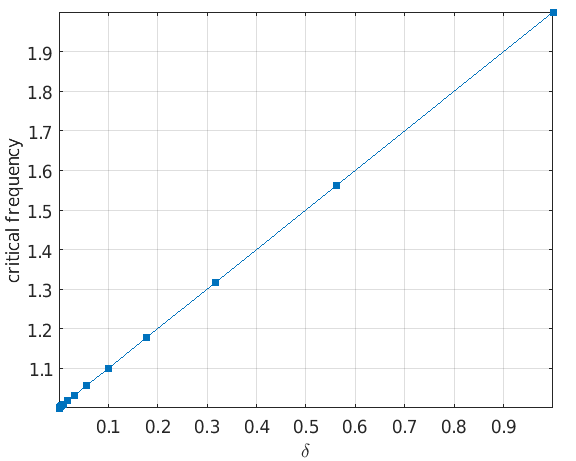}}
 \subfloat[Poles of transfer matrix]
 {\includegraphics[width=0.5\columnwidth]{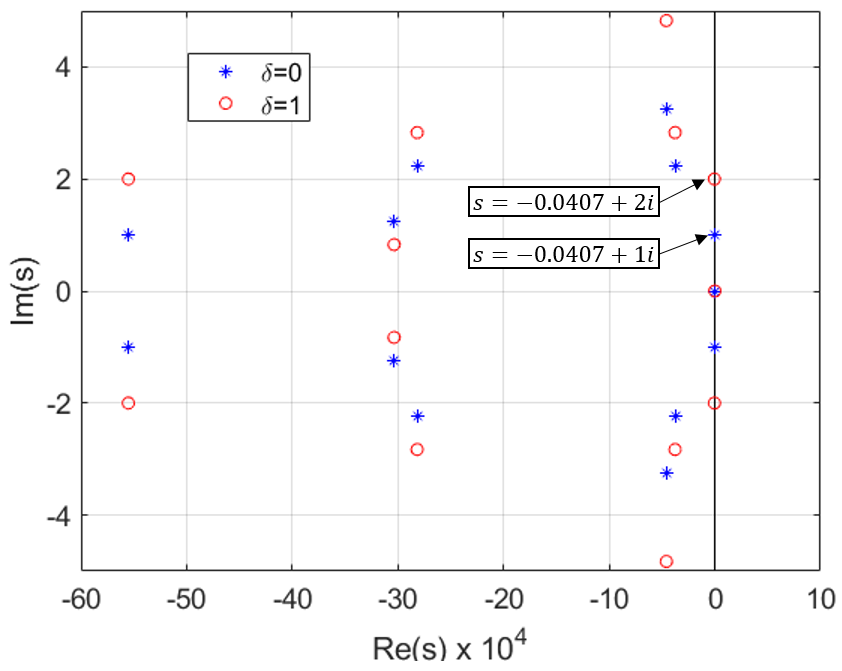}}
 \caption{(a) shows a linear increase of the critical frequency from $1$ for $\delta=0$ to $2$ for $\delta=1$. The critical frequency can be explained by the frequency of pole closest to $x=0$ in subplot (b), around $1$ for $\delta=0$, increasing to $2$ for $\delta=1$. (b) also shows that the Hamiltonian perturbation only shifts the frequencies of the poles along the imaginary axis.}\label{fig2}
\end{figure}

Subplot (a) in Fig.~\ref{fig2} shows the critical frequency $\omega_*$ for which $\|T_\delta(\imath\omega)\|$ assumes its maximum, which can be explained by plotting the poles of the transfer matrix, as shown in subplot (b): the critical frequency observed between $1$ ($\delta=0$) and $2$ ($\delta=1$) corresponds to the poles with the smallest real part. Crucially, subplot (b) shows that the Hamiltonian perturbation only moves the poles along the vertical axis.
Even if the basis in which decoherence acts changes as a result of the Hamiltonian perturbation, provided the decoherence rate $\gamma$ itself is not affected, the real part of the poles does not change. Therefore, no amount of Hamiltonian perturbation destabilizes the system as the poles remain in the left half-plane.

\subsection{Uncontrolled Chains and Rings}

\begin{figure}
 \centering
 \includegraphics[width=\columnwidth]{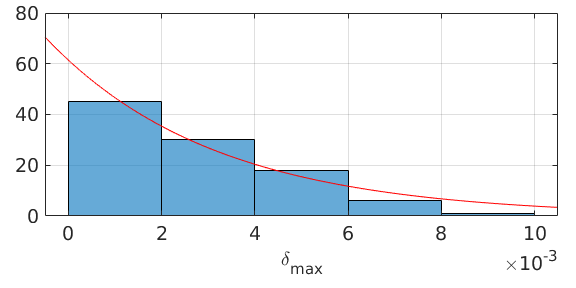}
 \caption{Typical exponential distribution of $\delta_{\max}$ over $100$ decoherence processes with a fixed strength $\gamma$. The example shown: $N=4$ ring, $S_1$ perturbation.}\label{fig3}
\end{figure}

\begin{figure}
 \subfloat[Chain $N=4$: Power law fit]{\includegraphics[width=0.5\columnwidth]{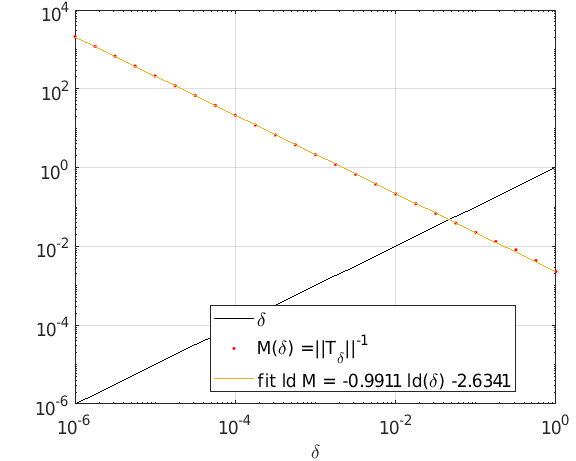}}
 \subfloat[Ring $N=4$: Generic spline fit] {\includegraphics[width=0.5\columnwidth]{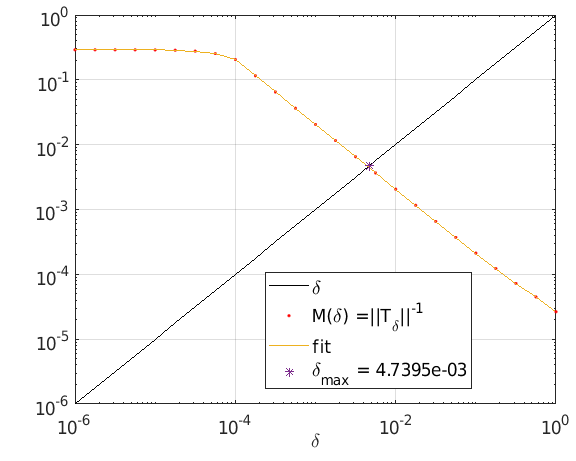}}
 \caption{Determination of $\delta_{\max}$ by (a) performing a linear fit of $y=\log_{10} f(\delta)$ for $f(\delta)=||T_\delta||^{-1}$ vs $x=\log_{10} \delta$ and (b) performing a spline fit of $f(x)$, and calculating the intersection with $y=x$.
 Observe the break-point in plot (b) indicating that $\S$ depends on $\delta$.}\label{fig4}
\end{figure}

We first consider one of the simplest cases, uncontrolled rings and chains. To systematically assess the effect of Hamiltonian perturbations on uncontrolled chains and rings we choose $N=4$. For the chain, there are three fundamental structured perturbations, $S_1$ to $S_3$, corresponding to perturbations to the couplings between spins $(1,2)$, $(2,3)$ and $(3,4)$, respectively. For rings, there is an additional perturbation corresponding to $(4,1)$. However, due to the symmetry with respect to cyclic permutations, it suffices to consider a single structured perturbation, e.g., $S_1$, for rings. For chains we explore all three perturbations, although we expect the effect of $S_1$ and $S_3$ to be identical due to inversion symmetry.

The effect of dephasing with unknown decoherence rates was modeled by selecting $100$ decoherence processes corresponding to valid pure dephasing processes, which were generated by a low-discrepancy sampling of the decoherence rate parameter space and eliminating unphysical combinations of parameters~\cite{Sophie_PRL_decoherence}. Each decoherence process can further be scaled to vary the total decoherence strength $\gamma$. For each structured perturbation to the Hamiltonian, $S_n$, $\delta_{\max}$ was calculated for all $100$ decoherence processes, scaled to a fixed decoherence strength $\gamma$. The Lilliefors test, a normality test based on the Kolmogorov–Smirnov test, rejected the null hypothesis that the resulting distributions for $\delta_{\max}$ come from the normal family at the $0.001$ significance level for both rings and chains, and all structured perturbations, and the distributions appear exponential as shown in Fig.~\ref{fig3}.

In determining $\delta_{\max}$ we observe an interesting difference between chains and rings. For all the uncontrolled chains, $||T_\delta||^{-1}$ fitted a power law over a wide range of $\delta$ as shown in Fig.~\ref{fig4}(a), while for the rings we consistently observed a cut-off value for $\delta$, below which $||T_\delta||$ appears to be effectively constant, as shown in Fig.~\ref{fig4}(b). Despite this difference, Fig.~\ref{fig5} shows that $\delta_{\max}$ as a function of $\gamma$ appears to follow a similar power law scaling for both uncontrolled rings and chains of size $N=4$. For the chain, we compared the distributions for $S_1$, $S_2$ and $S_3$ perturbations, and, as expected, the $S_1$ and $S_3$ distributions were identical (Pearson correlation coefficient of $1.0000$). The distributions for $S_1$ and $S_2$ were still strongly correlated with a (Pearson) correlation coefficient of $0.9574$, suggesting that for simple systems like uncontrolled rings or chains, it may be sufficient to consider a single perturbation.

\begin{figure}
\centering
\includegraphics[width=\columnwidth]{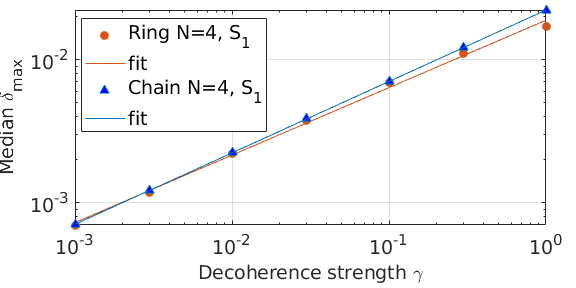}
\caption{Median of $\delta_{\max}$ over $100$ decoherence processes scaled by an overall strength $\gamma$ vs $\gamma$ for uncontrolled rings and chains of size $N=4$, respectively, suggests a power law scaling of $\gamma^\alpha$ with $\alpha$ just below $0.5$ for both cases.}\label{fig5}
\end{figure}

\subsection{Controlled Spin Networks}

\begin{figure*}
\centering
 \includegraphics[width=\textwidth]{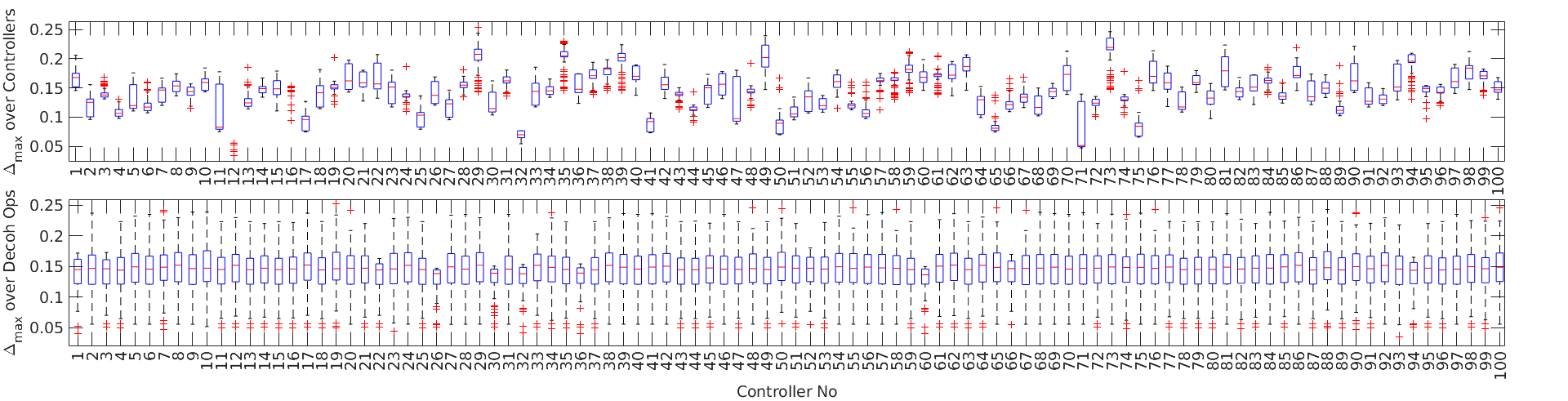}
 \caption{Box plot of $\delta_{\max}$ distributions for controlled ring ($N=5$). Distributions over the decoherence processes for different controllers (bottom) appear similar; distributions over the controllers for fixed coherence processes (top) differ wildly.}\label{fig:BoxPlot}
\end{figure*}

Next, we consider simple spin networks subject to energy landscape control, which in this context simply means that the diagonal elements of the Hamiltonian are non-zero. The controllers considered have been optimized to achieve high-fidelity information transfer between an input and output node, here $1$ and $3$, respectively. It was shown in previous work~\cite{Edmond_IEEE_AC} that these controllers have interesting robustness properties in that the differential sensitivity of the transfer fidelity for superoptimal controllers, i.e., controllers that achieve unit-fidelity transfer, vanishes, which runs counter to the trade-off between performance and robustness that is commonly seen for classical systems~\cite{Safonov_Laub_Hartmann}. However, other work indicates that this performance advantage disappears in the presence of decoherence~\cite{soneil_mu, CDC_decoherence}. Moreover, differential sensitivity gives no information how the system responds to larger perturbations over a prolonged period of time, or what the critical frequencies are.

Fig.~\ref{fig:BoxPlot} shows the \textbf{distributions for $\delta_{\max}$} over different decoherence processes and different controllers. There is little variation in the distributions over the decoherence for the different controllers but large variation in the distributions over different controllers for different decoherence operators. This is not too surprising, as $\delta_{\max}^{-1}$, can be regarded as a measure of the gain (or damping) of a perturbation, primarily determined by the decoherence rates.

\noindent{\textbf{Robust Performance.}} While the gain of a potential disturbance is a useful measure of robustness, it does not give any direct insight into the effect of the perturbation on the performance of a controller in terms of the desired time-domain information transfer. The performance measure used here is the overlap of the state $\rho(t_f)$ of the system at a fixed time $t_f$ with a desired state $\rho_{\rm out}$, $\mathcal{F} = \Tr[\rho_{\rm out} \rho(t_f)]$. If $\rho_{\rm out}$ represents a pure target state then the maximum fidelity $\mathcal{F}$ is $\Tr(\rho_{\rm out}^2)=1$, assumed for $\rho(t_f)=\rho_{\rm out}$. Therefore, $1-\mathcal{F}$ is a measure of the transfer error. Fig.~\ref{fig:performance} illustrates the distributions of the transfer error for different levels of Hamiltonian perturbations $\delta$ and decoherence, quantified by the decoherence strength $\gamma$. Absent decoherence (Fig.~\ref{fig:performance}(a)), the transfer error increases with $\delta$ and the logarithm scale plot suggests a power law dependence for the median transfer error. When decoherence is active, even at the lowest level of $\gamma=0.001$ (subplot (b)), the performance is mainly limited by decoherence and the effect of Hamiltonian perturbations only becomes significant for perturbations several orders of magnitude larger than the decoherence. For example, for $\gamma=0.001$, the effect of the Hamiltonian perturbation on the median of the error distribution only becomes significant for $\delta=0.1$, and for $\gamma=0.1$, even Hamiltonian perturbations at the $\delta=0.1$-level barely increase the error (subplot (d)).

The results are not unexpected. While decoherence stabilizes the system and dampens the effect of perturbations, which is desirable, further analysis shows that the states stabilized by decoherence in this example are classical mixed states, which perform poorly for information transfer.

\begin{figure}
 \subfloat[Transfer error $\gamma=0$]{\includegraphics[width=.5\columnwidth]{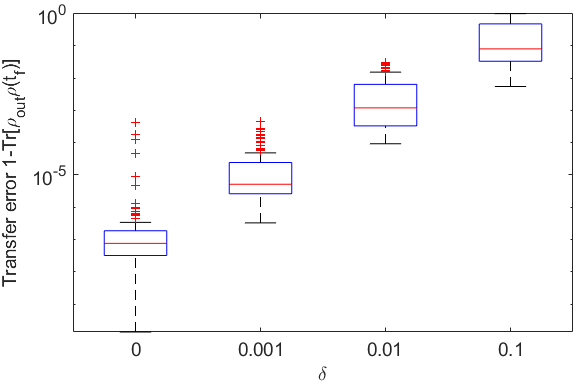}}
 \subfloat[Transfer error $\gamma=0.001$]{\includegraphics[width=.5\columnwidth]{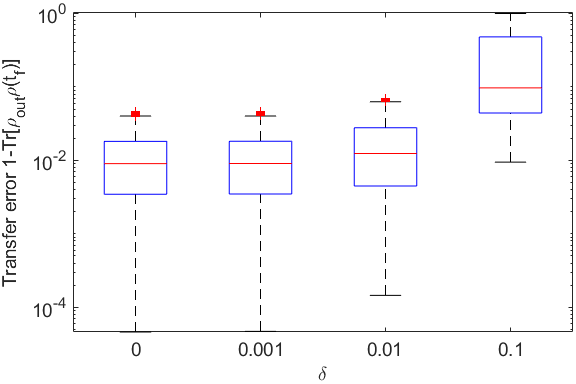}}\\
 \subfloat[Transfer error $\gamma=0.01$]{\includegraphics[width=.5\columnwidth]{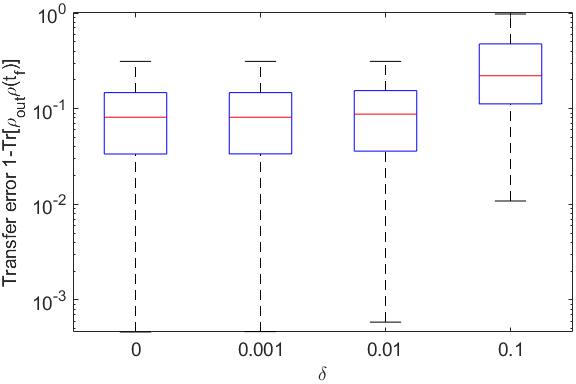}}
 \subfloat[Transfer error $\gamma=0.1$]{\includegraphics[width=.5\columnwidth]{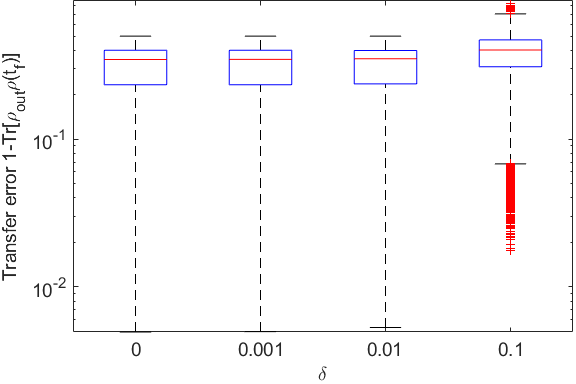}}
 \caption{Transfer error for various decoherence rates.}\label{fig:performance}
\end{figure}

The distributions for $\delta_{\max}$ over the decoherence processes for fixed controllers appear closer to normal distributions---the Lilliefors test rejected the null hypothesis that the distributions came from a normal family at the $0.05$ significance level only for a small fraction ($<15$\%) of the controllers considered. The distributions of $\delta_{\max}$ over different controllers for a fixed decoherence process, on the other hand, follow no clearly discernible pattern and do not appear normal.

\section{CONCLUSIONS AND FUTURE WORK}

A novel way to study robustness for quantum systems, introducing a substitute for $\mu$-analysis, referred to as $\delta_{\mathrm{max}}$-analysis, was presented. The transfer function analysis shows that a Hamiltonian system can never be completely robust in that perturbations of the system at its critical frequencies can become unbounded, which is bad news for applying established classical methods to robust quantum control. Decoherence changes this, providing natural damping that removes the purely imaginary poles in the transfer function, limiting the gain of any disturbance. Thus, generic decoherence stabilizes the system, making it more robust to disturbances. Moreover, a quantum system stabilized by decoherence cannot be destabilized by Hamiltonian perturbations even if the perturbation changes the basis in which decoherence acts if the overall decoherence rates are unchanged. Decoherence thus imbues quantum systems with significant robustness. However, states robustly stabilized by decoherence are often classical or at least do not offer any quantum advantage. A general theory of robust quantum control is needed to understand the trade-off between quantum advantage and robustness, including new tools to analyze the robustness of closed quantum systems (i.e., those without dissipation and dephasing), where a stability margin cannot be defined in the same way as for classical systems.

For the spin systems studied, the gains of the controlled chains (\emph{cf.}\ Fig.~\ref{fig4}) appear substantially lower than for the uncontrolled chains. This ultimately requires a physical explanation, and apparent scaling laws for the transfer function norm (again \emph{cf.} Fig.~\ref{fig4}) should be investigated rigorously in the future. Further work is also needed to understand how the fidelity of these spin transfer systems changes as $\delta$ increases, and how, if at all, this is encapsulated in the behavior of the critical frequencies, $\|T_\delta\|$ and $\mu$.

Finally, experimental confirmation of the results will be needed, which could be achieved by controlling the coupling between transmon qubits~\cite{Martinis} or mapping this Hamiltonian onto a system of cold, trapped atoms~\cite{coldAtomsXX}. Our theoretical method can also be extended to study similar Heisenberg-type Hamiltonians in, e.g. transmon qubits~\cite{transmon}, trapped ions~\cite{Monroe} or cold atoms in optical lattices~\cite{superexchange}. Applications to other systems (e.g.\ EIT, coupled qubits in cavities) should also be considered, as there are well-defined notions of quantumness (e.g. entanglement measures) that can be applied in these cases.


\end{document}